\newcommand{\be}{\begin{equation}}
\newcommand{\ee}{\end{equation}}
\newcommand{\bd}{\begin{displaymath}}
\newcommand{\ed}{\end{displaymath}}
\newcommand{\ba}{\begin{eqnarray}}
\newcommand{\ea}{\end{eqnarray}}
\def\R{{I \!\! R}}
\def\mD{ {\mathcal {D}}}
\def\e{\epsilon}
\def\v12{(v-w)}
\def\({\left(}
\def\){\right)}
\def\bgr#1\egr{{\allowdisplaybreaks\begin{gather}#1\end{gather}}}
\def\bma#1\ema{{\allowdisplaybreaks\begin{align}#1\end{align}}}
\def\oplem#1{\begin{lemma}\, {\rm #1}\, \it }
\def\cllem{\end{lemma}\rm \par }
\def\opthm#1{\begin{theorem}\, {\rm #1}\, \it }
\def\clthm{\end{theorem}\rm \par }
\def\N{\mathbb{N}}
\def\N{\mathbb{N}}
\def\R{\mathbb{R}}
\newcommand{\fer}[1]{(\ref{#1})}
\newcommand{\bq}{\begin{equation}}
\newcommand{\eq}{\end{equation}}
\def\bqa{\begin{eqnarray}}
\def\eqa{\end{eqnarray}}
\def\bd{\begin{displaymath}}
\def\ed{\end{displaymath}}
\newtheorem{rmk}{Remark}
\renewcommand{\(}{\left(}
\renewcommand{\)}{\right)}
\newtheorem{thm}{Theorem}
\newtheorem{lem}[thm]{Lemma}
\theoremstyle{remark}
\theoremstyle{definition}
\newenvironment{equations}{\equation\aligned}{\endaligned\endequation}
\begin{document}

\title{The fractional Fisher information and the \\ central limit theorem for stable laws}

 \author{Giuseppe Toscani \thanks{Department of Mathematics, University of Pavia, via Ferrata 1, 27100 Pavia, Italy.
\texttt{giuseppe.toscani@unipv.it} }}

\maketitle

\begin{center}\small
\parbox{0.85\textwidth}{
\textbf{Abstract.}
A new information-theoretic approach to the central limit theorem for stable laws is
presented. The main novelty is the concept of relative fractional Fisher information,
which shares most of the properties of the classical one, included  Blachman-Stam type
inequalities.  These inequalities relate the fractional Fisher information of the sum
of $n$ independent random variables to the information contained in sums over subsets
containing $n-1$ of the random variables.  As a consequence, a simple proof of the
monotonicity of the relative fractional Fisher information in central limit theorems
for stable law is obtained, together with an explicit decay rate.
\vskip 5mm
\noindent
\textbf{Keywords}. Central limit theorem, Fractional calculus,  Fisher information, Information inequalities, Stable laws.}
\end{center}

\medskip



\section{Introduction}

The entropy functional (or Shannon's entropy) of a real valued random variable $X$
with density $f$ is defined as
 \be\label{Shan}
H(X) = H(f) = - \int_{\R} f(x) \log f(x)\, dx.
 \ee
provided that the integral makes sense. Among random variables with the same variance
$\sigma$ the standard Gaussian $Z$ with variance $\sigma$ has the largest entropy. If
the $X_j$'s are independent copies of a centered random variable $X$ with variance
$1$, then the (classical) central limit theorem implies that the law of the normalized
sums
 \[
S_n = \frac 1{\sqrt n}\sum_{j=1}^n X_j
 \]
converges weakly to the law of the centered standard Gaussian $Z$, as $n$ tends to infinity.

A direct consequence of the entropy power inequality, postulated by Shannon \cite{Sha}
in the fourthies, and subsequently proven by Stam  \cite{Sta} (cf. also Blachman
\cite{Bla}), implies that $H(S_2)\ge H(S_1)$. The entropy of the normalized sum of two
independent copies of a random variable is larger than that of the original. A shorter
proof was obtained later by Lieb \cite{Lieb} (cf. also  \cite{Bar, Joh, JB} for
exhaustive presentation of the subject). While inductively expected that  the entire
sequence $H(S_n)$ should increase with $n$, as conjectured  by Lieb in $1978$
\cite{Lieb}, a rigorous  proof of this result was found only $25$ years later by
Artstein, Ball,  Barthe and A. Naor  \cite{ABBN1, ABBN2}.

More recently, simpler proofs of the monotonicity of the sequence $H(S_n)$ have been
obtained by Madiman and Barron \cite{MB, BM} and Tulino and Verd\'u \cite{TV}. Madiman and
Barron \cite{MB}, by means of a detailed analysis of variance projection properties,
derived new entropy power inequalities for sums of independent random variables, and,
as a consequence, the monotonicity of entropy in central limit theorems for
independent and identically distributed random variables. Tulino and Verd\'u \cite{TV}
obtained analogous results by taking advantage of projection through minimum
mean-squared error interpretation. As observed in \cite{BM}, the proofs of the main
result in both \cite{BM, TV} share essential similarities.

As suggested by Stam's proof of the entropy power inequality \cite{Bla, Sta}, most of the results about monotonicity  benefit from the reduction from entropy to another information-theoretic notion, the Fisher information of a random variable. For sufficiently regular densities, the Fisher information can
be written as
 \be\label{fish}
I(X) = I(f) = \int_{\{f>0\}} \frac{|f'(x)|^2}{f(x)} \, dx.
 \ee
Among random variables with the same variance $\sigma$, the Gaussian $Z$ has smallest Fisher
information $1/\sigma$. Fisher information and entropy are related each other by the so-called
de Bruijn relation \cite{Bar, Sta}. If  $u(x,t)= u_t(x)$ denotes the solution to the initial value problem for the heat equation in the whole space $\R$,
 \be\label{heat}
 \frac{\partial u}{\partial t} =  \frac{\partial^2 u}{\partial x^2},
 \ee
 leaving from an initial probability density function $f(x)$,
 \[
 I(f) = \frac d{dt} H(u_t)|_{t=0}.
 \]
A particularly clear explanation of this link is given in the article of Carlen and Soffer \cite{CS} (cf. also Barron \cite{Bar} and Brown \cite{Bro}).

It is noticeable that the connection between Fisher information and the central limit
theorem was noticed at the time of Stam's proof of entropy power inequality by Linnik
\cite{Lin}, who first used Fisher information in a proof of the central limit theorem.

Recently, the role of Fisher information in limit theorems has been considered also in
situations different from the classical central limit theorem. In \cite{BCG3} (cf.
also \cite{BCG1, BCG2}) Bobkov,  Chistyakov  and  G\"otze enlightened the possibility
to make use of the relative Fisher information to study convergence towards a stable
law \cite{Fel, GK, LR79}. If the $X_j$'s are independent copies of a centered random
variable $X$ which lies in the domain of normal attraction of a random variable
$Z_\lambda$ with {\it L\'evy symmetric stable distribution}, the central limit theorem
for stable laws implies that the law of the normalized sums
 \be\label{stab}
T_n = \frac 1{n^{1/\lambda}}\sum_{j=1}^n X_j
 \ee
converges weakly to the law of the centered stable $Z_\lambda$, as $n$ tends to
infinity. Given a random variable $X$, with probability distribution $G(x)$, the
Fourier transform of $G(x)$ will be denoted by
        \[
       {\cal F}G(\xi)= \hat G(\xi) =\int_\R  \e^{-i\xi x}\, d G(x),\quad  \xi\in\R.
        \]
In case $X$ possesses a probability density $g(x)= G'(x)$, we will still denote the
Fourier transform of $g(x)$ by
  \[
       {\cal F}g(\xi)= \hat g(\xi) =\int_\R  \e^{-i\xi x}\, g(x) \, dx,\quad  \xi\in\R.
        \]

A {L\'evy symmetric stable distribution} $L_\lambda(x)$ of order $\lambda$ is defined
in terms of its Fourier transform by
 \be\label{levy}
 \widehat L_\lambda(\xi) = \e^{-|\xi|^\lambda}.
 \ee
While the Gaussian density is related to the linear diffusion equation \fer{heat},  L\'evy distributions are deeply related to linear fractional diffusion equations
 \be\label{frac1}
 \frac{\partial u(x,t)}{\partial t}= \mathcal D_{2\alpha} u(x,t).
 \ee
For the classical diffusion case described by \fer{heat}, $\alpha=1$ and the diffusion
operator models a Brownian diffusion process. For fractional diffusion, where    $1/2
< \alpha < 1$,  the $\mathcal D_{2\alpha}$ operator in \fer{frac1} is commonly
referred to as \emph{anomalous diffusion}, and the underlying stochastic process is a
L\'evy stable flight.

Indeed, the fractional diffusion equation \fer{frac1} can be fruitfully described in terms of Fourier variables, where it takes the form
 \be\label{frac-fou}
 \frac{\partial \hat u(\xi,t)}{\partial t}=-|\xi|^{2\alpha} \hat u(\xi,t).
 \ee
Equation \fer{frac-fou} can be easily solved. Its solution
 \be\label{fund}
\hat u(\xi,t) = \hat u_0(\xi)\e^{-|\xi|^{2\alpha} t}.
 \ee
shows that the fractional diffusion equation admits a fundamental solution, given by a
scaled in time L\'evy distribution of order $\lambda= 2\alpha$.

Equation \fer{fund} enhances a strong analogy between the solution of the heat
equation \fer{heat} and the solution to
the fractional diffusion equation \fer{frac1}, and suggests that information
techniques used for the former could be fruitfully used, by suitably adapting these
techniques to the new situation, to the latter \cite{FPTT}.

At difference with the analysis of Bobkov,  Chistyakov  and  G\"otze \cite{BCG3}, in
this note we will show that this analogy can be reasonably stated at the level of
Fisher information by resorting to a new definition, which in our opinion is better
adapted to the anomalous diffusion case. In Section \ref{score}, we introduce a
generalization of (relative) Fisher information, the relative fractional Fisher
information, that is constructed to vanish on the L\'evy symmetric stable
distribution, and shares most of the properties of the classical one defined by
\fer{fish}. By using this new definition, we will present in Section \ref{mono} an
information-theoretic proof of monotonicity of the fractional Fisher information of
the normalized sums \fer{stab} by adapting the ideas developed in \cite{BM}. At
difference with the classical case, it will be showed in Section \ref{mono} that the
monotonicity also implies convergence in fractional Fisher information at a precise
rate, which depends on the value of the exponent that characterizes the L\'evy
distribution.

Fractional in space diffusion equations appear in many contexts. Among others, the
review paper by Klafter et al. \cite{KZS97} provides numerous references to physical
phenomena in which these anomalous diffusions occur (cf. \cite{BWM00, Cha98, GM98,
MFL02, SBMW01} and the reference therein for various details on both mathematical and
physical aspects). Also, fractional diffusion equations in the nonlinear setting have
been intensively studied in the last years by Caffarelli, Vazquez et al. \cite{CV,
Car, Va1, Va2}. The reading of these papers was essential in moving my interest
towards the present topic.

\section{Scores and fractional Fisher information}\label{score}

In the rest of this paper, if not explicitly quoted, and without loss of generality,
we will always assume that any random variable $X$ we will consider is centered, i.e.
$E(X)=0$, where as usual $E(\cdot)$ denotes mathematical expectation.

In this section we briefly summarize the mathematical notations and the  meaning of
the fractional diffusion operator $\mathcal D_{2\alpha}$ which appears in equation
\fer{frac1}. For $0 <\alpha < 1$ we let $R_\alpha$ be the one-dimensional
\emph{normalized} Riesz potential operator defined for locally integrable functions by
\cite{Rie, Ste}
 \[
 R_\alpha(f)(x) = S(\alpha) \int_\R \frac{f(y)\, dy}{|x-y|^{1-\alpha}}.
 \]
The constant $S(\alpha)$ is chosen to have
 \be\label{rt}
\widehat{R_\alpha(f)}(\xi) = |\xi|^\alpha \widehat f(\xi).
 \ee
Since for $0 <\alpha < 1$ it holds  \cite{Lieb83}
 \be\label{hom}
 \mathcal F |x|^{\alpha -1} = |\xi|^{-\alpha} \pi^{1/2} \Gamma \left(\frac{1-\alpha}2 \right) \Gamma \left(\frac{\alpha}2 \right),
 \ee
where, as usual $\Gamma(\cdot)$ denotes the Gamma function, the value of $S(\alpha)$
is given by
 \[
 S(\alpha) = \left[\pi^{1/2} \Gamma \left(\frac{1-\alpha}2 \right) \Gamma \left(\frac{\alpha}2 \right)\right]^{-1}.
 \]
 Note that $S(\alpha) = S(1-\alpha)$.

We then define the fractional derivative of order $\alpha$ of a real function $f$ as
($0 <\alpha < 1$)
 \be\label{fa}
 \frac{d^\alpha f(x)}{dx^\alpha} = \mD_\alpha f(x) = \frac{d}{dx}R_{1-\alpha}(f)(x).
 \ee
Thanks to \fer{rt}, in Fourier variables
 \be\label{d1}
 \widehat{\mD}_\alpha f(\xi) = i \frac{\xi}{|\xi|} |\xi|^\alpha \widehat{f}(\xi).
 \ee

In theoretical statistics, the score or efficient score \cite{CH, BM} is the
derivative, with respect to some parameter $\theta$, of the logarithm of the
likelihood function (the log-likelihood). If the observation is $X$ and its likelihood
is $L(\theta;X)$, then the score $\rho_L(X)$ can be found through the chain rule
 \be
\rho_L(\theta, X) = \frac{1}{L(\theta;X)} \frac{\partial L(\theta;X)}{\partial\theta}.
\ee Thus the score indicates the sensitivity of $L(\theta;X)$ (its derivative
normalized by its value). In older literature, the term \emph{linear score} refers to
the score with respect to an infinitesimal translation of a given density. In this
case, the likelihood of an observation is given by a density of the form
$L(\theta;X)=f(X+\theta)$. According to this definition, given a random variable $X$
in $\R$ distributed with a differentiable probability density function $f(x)$, its
linear score $\rho$ (at $\theta=0$) is given by
 \be\label{sco1}
\rho(X) = \frac{f'(X)}{f(X)}.
 \ee
The linear score has zero mean, and its variance is just the Fisher information
\fer{fish} of $X$.

Also, the notion of relative score has been recently considered in information theory
\cite{Guo} (cf. also \cite{BCG3}).  For every pair of random variables $X$ and $Y$
with differentiable density functions $f$ (respectively $g$), the score function of
the pair relative to $X$ is represented by
 \be\label{rel-sco}
\tilde\rho(X) = \frac{f'(X)}{f(X)} - \frac{g'(X)}{g(X)}.
 \ee
In this case, the relative to $X$ Fisher information between $X$ and $Y$ is just the
variance of $\tilde\rho(X)$. This notion  is satisfying because it represents the
variance of some error due to the mismatch between the prior distribution $f$ supplied
to the estimator and the actual distribution $g$. Obviously, whenever $f$ and $g$ are
identical, then the relative Fisher information is equal to zero.

Let $z_\sigma(x)$ denote the Gaussian density in $\R$ with zero mean and variance
$\sigma$
 \be\label{max}
z_\sigma(x) = \frac 1{\sqrt{2\pi \sigma}}\exp\left(- \frac{|x|^2}{2\sigma}\right).
 \ee
Then a Gaussian random variable of density $z_\sigma$ is uniquely defined by the score
function
 \[
 \rho(Z_\sigma) = - Z_\sigma/\sigma.
 \]
Also, the relative (to $X$) score function of $X$ and $Z_\sigma$ takes the simple
expression
 \be\label{rel-z}
\tilde\rho(X) = \frac{f'(X)}{f(X)} + \frac X\sigma,
 \ee
which induces a (relative to the Gaussian) Fisher information
 \be\label{fish-r}
\tilde I(X) = \tilde I(f) = \int_{\{f>0\}} \left( \frac{f'(x)}{f(x)} + \frac x\sigma
\right)^2f(x)\, dx.
 \ee
Clearly, $\tilde I(X) \ge 0$, while $\tilde I(X) = 0$ if $X$ is a centered Gaussian
variable of variance $\sigma$.

 The concept of linear score can be naturally extended
to cover fractional derivatives. Given a random variable $X$ in $\R$ distributed with
a probability density  function $f(x)$ that has  a well-defined  fractional derivative
of order $\alpha$, with $0<\alpha < 1$, its linear fractional score, denoted by
$\rho_{\alpha+1}$ is given by
 \be\label{sco2}
\rho_{\alpha+1}(X) = \frac{\mD_\alpha f(X)}{f(X)}.
 \ee
Thus the linear fractional score indicates the non local (fractional) sensitivity of
$f(X+\theta)$ at $\theta =0$ (its fractional derivative normalized by its value).
Differently from the classical case, the fractional score of $X$ is linear in $X$ if
and only if $X$ is a L\'evy distribution of order $\alpha +1$. Indeed, for a given
positive constant $C$, the identity
 \[
\rho_{\alpha+1}(X)  = - C X,
 \]
is verified  if and only if, on the set $\{f>0\}$
 \be\label{ccw}
\mD_\alpha f(x) = -Cxf(x).
 \ee
Passing to Fourier transform, this identity yields
 \[
i \xi |\xi|^{\alpha-1} \widehat f(\xi) = -i C \frac{\partial \widehat f(\xi)}{\partial
\xi},
 \]
and from this follows
 \be\label{ok3}
\widehat f(\xi) = \widehat f(0) \exp\left\{-\frac{|\xi|^{\alpha +1}}{C(\alpha
+1)}\right\}.
 \ee
Finally,  by choosing $C = (\alpha+1)^{-1}$, and imposing that $f(x)$ is a probability
density function (i.e. by fixing $\widehat f(\xi= 0) =1$), we obtain  that the L\'evy
stable law of order $\alpha + 1$ is the unique probability density solving  \fer{ccw}.

It is important to remark that, unlike in the case of the linear score, the variance of the fractional score is in general
unbounded. One can easily realize this by looking at the variance of the
fractional score in the case of a L\'evy variable. For a L\'evy variable, in fact, the
variance of the fractional score coincides with a multiple of its variance, which is
unbounded \cite{GK,LR79}. For this reason, a consistent definition in this case is
represented by the relative fractional score. In reason of \fer{ok3}, a L\'evy random
variable of density $z_\lambda$, with $1<\lambda<2$ is uniquely defined by a linear
fractional score function
 \[
 \rho_\lambda (Z_{\lambda}) = -\frac{Z_{\lambda}}{\lambda},
 \]
the relative (to $X$) fractional score function of $X$ and $Z_{\lambda}$ assumes the
simple expression
 \be\label{rel-z2}
\tilde\rho_\lambda(X) = \frac{\mD_{\lambda-1} f(X)}{f(X)} + \frac X{\lambda},
 \ee
which induces a (relative to the L\'evy) fractional Fisher information (in short
$\lambda$-Fisher relative information)
 \be\label{fish-r2}
I_\lambda(X) =  I_\lambda(f) = \int_{\{f>0\}} \left( \frac{\mD_{\lambda-1} f(x)}{f(x)} +
\frac x\lambda \right)^2f(x)\, dx.
 \ee
The fractional Fisher information is always greater or equal than zero, and it is
equal to zero if and only if $X$ is a L\'evy symmetric stable distribution of order
$\lambda$. At difference with the relative standard relative Fisher information,
$I_\lambda$ is well-defined any time that the the random variable $X$ has a
probability density function which is suitably closed to the L\'evy stable law
(typically lies in a subset of the domain of attraction). We will define by $\mathcal
P_\lambda$ the set of probability density functions such that $I_\lambda(f) <
+\infty$, and we will say that a random variable $X$ lies in the domain of attraction
of the $\lambda$-Fisher information if $I_\lambda(X) < +\infty$. More in general, for
a given positive constant $\upsilon$, we will consider other relative fractional score
functions given by
 \be\label{rel-z3}
\tilde\rho_{\lambda, \upsilon}(X) = \frac{\mD_{\lambda-1} f(X)}{f(X)} + \frac
X{\lambda\upsilon}.
 \ee
This leads to the relative fractional Fisher information
 \be\label{fish-r3}
 I_{\lambda,\upsilon}(X) =  I_{\lambda,\upsilon}(f) = \int_{\{f>0\}} \left( \frac{\mD_{\lambda-1}
f(x)}{f(x)} + \frac x{\lambda\upsilon} \right)^2f(x)\, dx.
 \ee
Clearly,  $I_\lambda= I_{\lambda, 1}$. Analogously, we will define by $\mathcal
P_{\lambda,\upsilon}$ the set of probability density functions such that
$I_{\lambda,\upsilon}(f) < +\infty$, and we will say that a random variable $X$ lies
in the domain of attraction  if $I_{\lambda, \upsilon}(X) < +\infty$.

\begin{rmk}
{\rm The characterization of the functions which belong to the domain of attraction of
the relative fractional Fisher information is not an obvious task. However, it can be
seen that this set of functions is not empty. We will present an explicit example in
the Appendix.}
\end{rmk}

\section{Monotonicity of the fractional Fisher information}\label{mono}

We now proceed to recover some useful properties of the relative fractional score
function $\tilde\rho_\lambda(X)$ defined in \fer{rel-z2}. Most of these properties are
easy generalizations of analogous ones proven in \cite{BM} for the standard linear
score function. The main difference here is that we need to resort to the relative
one. The following Lemma holds

\begin{lem}\label{l1}
Let $X_1$ and $X_2$ be independent random variables with smooth densities, and
let $\rho^{(1)}$ (respectively $\rho^{(2)}$) denote their fractional scores.
Then, for each constant $\lambda$, with $1<\lambda <2$, and each positive constant $\delta$, with $0< \delta < 1$, the relative fractional score function of the sum $X_1 + X_2$ can be expressed as
 \begin{equation}\label{sum}
 \tilde\rho_\lambda(x) =
 E\left[ \delta \, \tilde\rho_{\lambda,\delta}^{(1)}(X_1)
 +(1- \delta)\, \tilde\rho_{\lambda,1-\delta}^{(2)}(X_2) \big| X_1+X_2
 =x\right].
 \end{equation}
 \end{lem}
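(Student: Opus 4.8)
The plan is to mirror the classical Madiman--Barron projection identity for the linear score, replacing the ordinary derivative by the fractional operator $\mD_{\lambda-1}$ and carefully tracking the linear translation terms that distinguish the various relative scores in \fer{rel-z3}. Write $f_1,f_2$ for the densities of $X_1,X_2$ and $f=f_1*f_2$ for the density of the sum, so that by \fer{sco2} the fractional scores are $\rho^{(1)}=\mD_{\lambda-1}f_1/f_1$ and $\rho^{(2)}=\mD_{\lambda-1}f_2/f_2$.

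The first step is to establish the fractional analogue of the rule for differentiating a convolution, namely
\[
\mD_{\lambda-1}(f_1*f_2) = (\mD_{\lambda-1}f_1)*f_2 = f_1*(\mD_{\lambda-1}f_2).
\]
This is immediate in Fourier variables: by \fer{d1} and the convolution theorem one has $\widehat{\mD_{\lambda-1}(f_1*f_2)}(\xi)=i\,\tfrac{\xi}{|\xi|}\,|\xi|^{\lambda-1}\,\widehat{f_1}(\xi)\,\widehat{f_2}(\xi)$, and the multiplier $i\,\tfrac{\xi}{|\xi|}\,|\xi|^{\lambda-1}$ may be attached to either factor.

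Next I would use that, by independence, the conditional density of $X_1$ given $X_1+X_2=x$ is $f_1(x_1)f_2(x-x_1)/f(x)$. Combined with the convolution identity this gives
\[
E\bigl[\rho^{(1)}(X_1)\mid X_1+X_2=x\bigr] = \int_\R \frac{\mD_{\lambda-1}f_1(x_1)}{f_1(x_1)}\,\frac{f_1(x_1)f_2(x-x_1)}{f(x)}\,dx_1 = \frac{\mD_{\lambda-1}f(x)}{f(x)},
\]
and symmetrically $E[\rho^{(2)}(X_2)\mid X_1+X_2=x]=\mD_{\lambda-1}f(x)/f(x)$, so that the convex combination with weights $\delta,1-\delta$ yields the same value. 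To recover the translation term, I would then invoke \fer{rel-z3}: since the outer factors cancel the denominators $\lambda\delta$ and $\lambda(1-\delta)$,
\[
\delta\,\tilde\rho_{\lambda,\delta}^{(1)}(X_1) + (1-\delta)\,\tilde\rho_{\lambda,1-\delta}^{(2)}(X_2) = \delta\rho^{(1)}(X_1) + (1-\delta)\rho^{(2)}(X_2) + \frac{X_1+X_2}{\lambda}.
\]
Taking the conditional expectation given $X_1+X_2=x$ and using $X_1+X_2=x$ on this event, the last term becomes $x/\lambda$ while the score part becomes $\mD_{\lambda-1}f(x)/f(x)$, and their sum is precisely $\tilde\rho_\lambda(x)$ of \fer{rel-z2}, which is \fer{sum}.

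The main obstacle is analytic rather than algebraic: one must justify that $\mD_{\lambda-1}$, a nonlocal operator, genuinely commutes with the convolution and that Fubini's theorem applies when passing to the conditional integral. This requires the densities to be smooth and to decay fast enough for the Fourier-side manipulations and the interchange of integrations to be legitimate, which is exactly what the hypothesis of smooth densities lying in the relevant domain of attraction is intended to guarantee.
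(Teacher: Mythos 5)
Your proposal is correct and follows essentially the same route as the paper: the Fourier-side identity $\mD_{\lambda-1}(f_1*f_2)=(\mD_{\lambda-1}f_1)*f_2=f_1*(\mD_{\lambda-1}f_2)$, the resulting projection property $E\bigl[\rho^{(1)}(X_1)\mid X_1+X_2=x\bigr]=\mD_{\lambda-1}f(x)/f(x)$ (which the paper writes via expectations rather than via the explicit conditional density, a purely cosmetic difference), and finally the cancellation of the weights $\lambda\delta$ and $\lambda(1-\delta)$ in the linear terms, which the paper phrases as the identity $x=\delta\,E\bigl[X_1/\delta\mid X_1+X_2=x\bigr]+(1-\delta)\,E\bigl[X_2/(1-\delta)\mid X_1+X_2=x\bigr]$. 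Your closing remark on justifying the Fourier manipulations and Fubini is a reasonable caveat that the paper subsumes under the smoothness hypothesis.
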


\begin{proof}
Let $f_j$, $j =1,2$ and $f$ be the densities of $X_j$, $j=1,2$ and $X_1 +X_2$. Then,
the density of $X_1+X_2$ is given by the convolution product of $f_1$ and $f_2$,  $f=
f_1*f_2$. To start with, we remark that the fractional derivatives, as given by
\fer{fa} and \fer{d1}, share the same behavior, with respect to convolutions, of the
usual derivatives. Indeed, thanks to \fer{d1}, it follows that
 \[
 \mD_{\lambda-1}f(x) = \left(\mD_{\lambda-1} f_1\right)*f_2(x)= f_1*\left(\mD_{\lambda-1} f_2\right)(x).
 \]
The previous identity can be rewritten in terms of expectations as \cite{BM}
 \begin{equations}\nonumber
 \mD_{\lambda-1} f(x) =& \mD_{\lambda-1} E[f_1(x-X_2)]\\
 =& E[\mD_{\lambda-1} f_1(x-X_2)] \\
 =&  E[\rho_\lambda^{(1)}(x-X_2)f_1(x-X_2)].
 \end{equations}
Therefore
  \begin{equations}\nonumber
 \rho_\lambda(x) = &\frac{\mD_{\lambda-1} f(x)}{f(x)}\\
  =&  E\left[\rho_\lambda^{(1)}(x-X_2)\frac{f_1(x-X_2)}{f(x)}\right] \\
  =& E\left[\rho_\lambda^{(1)}(X_1)\big| X_1 + X_2=x\right].
 \end{equations}
As usual, given the random variables $X$ and $Y$, we denoted by $E[X|Y]$ the
conditional expectation of $X$ given $Y$. Exchanging the indexes, we obtain an
identical expression which relates the fractional score of the sum to the second
variable $X_2$. Hence, for each positive constant $\delta$, with $0<\delta <1$ we have
 \be\label{im}
\rho_\lambda(x) = \delta E\left[\rho_\lambda^{(1)}(X_1)\big| X_1 + X_2=x\right] +
(1-\delta) E\left[\rho_\lambda^{(2)}(X_2)\big| X_1 + X_2=x\right].
 \ee
To conclude the proof, it is enough to remark that the following identity holds true
 \[
 x = \delta E\left[\frac{X_1}\delta\big| X_1 + X_2=x\right] +
(1-\delta) E\left[\frac{X_2}{1-\delta}\big| X_1 + X_2=x\right].
 \]
\end{proof}

Lemma \ref{l1} ha several interesting consequences.  Since the norm of the relative
fractional score is not less than that of its projection (i.e. by the Cauchy--Schwarz
inequality), we obtain
 \begin{equations}\label{BS}
 I_\lambda (X_1+X_2) =& E\left[ \tilde\rho_\lambda^2(X_1+X_2)\right] \le
  E\left[\left(  \delta \, \tilde\rho_{\lambda,\delta}^{(1)}(X_1)
 +(1- \delta)\, \tilde\rho_{\lambda,1-\delta}^{(2)}(X_2) \right)^2\right] = \\
 & \delta^2  I_{\lambda, \delta} (X_1) + (1-\delta)^2I_{\lambda,1- \delta} (X_2).
\end{equations}
Inequality \fer{BS} is the analogous of the Blachman--Stam inequality \cite{Bla, Sta},
and allows to bound the relative fractional Fisher information of the sum of
independent variables   in terms of the relative fractional Fisher information of its
addends. Inequality \fer{BS} can be reduced to a normal form by making use of scaling
arguments. Indeed, for any given random variable $X$ such that one of the two sides is
bounded, and positive constant $\upsilon$, the following identity holds
 \be\label{scal}
 I_{\lambda,\upsilon}(\upsilon^{1/\lambda}X) = \upsilon^{-2(1-1/\lambda)} I_\lambda
 \left(X\right).
 \ee
Using identity \fer{scal} into inequality \fer{BS} we can write it in the form
 \be\label{BS1}
I_\lambda (\delta^{1/\lambda}X_1+(1-\delta)^{1/\lambda}X_2)\le \delta^{2/\lambda}
I_\lambda\left(X_1\right) + (1- \delta)^{2/\lambda} I_\lambda\left( X_2\right).
 \ee
Note that equality into \fer{BS1} holds if and only if both $X_1$ and $X_2$ are L\'evy
variables of order $\lambda$. Indeed, equality into \fer{BS} holds if and only if the
relative score functions satisfy
 \be\label{33}
\tilde\rho_{\lambda}^{(1)}(x) =c_1; \quad \tilde\rho_{\lambda}^{(2)}(x)=c_2.
 \ee
In fact, if this is the case, the Cauchy--Schwarz inequality we used to obtain
\fer{BS} is satisfied with the equality sign. On the other hand, proceeding as in the
derivation of \fer{ok3}, we conclude that \fer{33} implies that the densities of the
$X_j$'s, $j=1,2$ have Fourier transforms
 \be\label{ok2}
\widehat f_j(\xi) = \exp\left\{-|\xi|^{\lambda} + ic_j\xi \right\}.
 \ee
 We proved
\begin{thm}\label{bl} Let $X_j$, $j_1,2$ be independent random variables such that their relative
fractional Fisher information  functions $I_\lambda(X_j)$, $j=1,2$ are bounded for
some $\lambda$, with $1<\lambda <2$. Then, for each constant $\delta$ with $0<\delta <
1$,  $I_\lambda(\delta^{1/\lambda}X_1+(1-\delta)^{1/\lambda}X_2) $ is bounded, and
inequality \fer{BS1} holds. Moreover, there is equality in \fer{BS1} if and only if,
up to translation, both $X_j$, $j=1,2$ are L\'evy variables of exponent $\lambda$.
\end{thm}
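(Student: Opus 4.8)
The plan is to read Theorem~\ref{bl} as the packaging of three facts already in hand: the representation of the relative fractional score of a sum as a conditional expectation (Lemma~\ref{l1}), the $L^2$-contractivity of conditional expectation, and the scaling identity \fer{scal}. First I would apply Lemma~\ref{l1} to $X_1$ and $X_2$, writing $\tilde\rho_\lambda(X_1+X_2)$ as the conditional expectation, given $X_1+X_2$, of the combination $\delta\,\tilde\rho^{(1)}_{\lambda,\delta}(X_1)+(1-\delta)\,\tilde\rho^{(2)}_{\lambda,1-\delta}(X_2)$.

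For the inequality and the boundedness I would square this identity and take expectations. The $L^2$-norm of a conditional expectation never exceeds that of the variable it averages, so the left side is at most $E[(\delta\,\tilde\rho^{(1)}_{\lambda,\delta}(X_1)+(1-\delta)\,\tilde\rho^{(2)}_{\lambda,1-\delta}(X_2))^2]$; expanding the square, the cross term vanishes because the two scores are independent and each has zero mean --- the integral of $\mD_{\lambda-1}f_j$ is $\widehat{\mD_{\lambda-1}f_j}(0)=0$ since $|\xi|^{\lambda-1}\to0$ at the origin, and the $X_j$ are centered. This is precisely the Blachman--Stam bound \fer{BS}. Rewriting it with $\delta^{1/\lambda}X_1$ and $(1-\delta)^{1/\lambda}X_2$ in place of $X_1,X_2$ and using the scaling law \fer{scal} to turn $I_{\lambda,\delta}(\delta^{1/\lambda}X_1)$ into $\delta^{-2(1-1/\lambda)}I_\lambda(X_1)$ collapses the prefactors and yields \fer{BS1}; the boundedness of its left-hand side is then immediate, its upper bound $\delta^{2/\lambda}I_\lambda(X_1)+(1-\delta)^{2/\lambda}I_\lambda(X_2)$ being finite by hypothesis.

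The equality case needs the most care. Saturation of the contraction step is equivalent to the averaged combination being almost surely a function of $X_1+X_2$ alone; being a sum of a function of $X_1$ and a function of $X_2$ with $X_1,X_2$ independent, a Cauchy-type functional equation then forces each raw fractional score $\mD_{\lambda-1}f_j/f_j$ to be affine. Constancy is not yet automatic: an affine score corresponds, through the Fourier computation behind \fer{ok3}, to a \emph{scaled} translated L\'evy density, and it is only the requirement that the relative fractional information remain finite --- which fails for a genuinely scaled stable law, since it has infinite variance when $\lambda<2$ --- that pins the slope to $-1/\lambda$ and collapses the scores to the constants \fer{33}. Feeding \fer{33} back into the Fourier argument delivers the transforms \fer{ok2}, that is standard L\'evy laws of exponent $\lambda$ up to translation; conversely such laws make both relative scores constant and saturate Cauchy--Schwarz, closing the equivalence.

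I expect the genuine obstacle to sit entirely in this last step. Proving the inequality and the boundedness is a mechanical assembly of Lemma~\ref{l1} and \fer{scal}, the only routine verifications being the legitimacy of the conditional expectations and of the convolution identity $\mD_{\lambda-1}(f_1*f_2)=(\mD_{\lambda-1}f_1)*f_2$ under the finite-information hypothesis. The equality analysis, by contrast, chains three more delicate points: the move from conditional Cauchy--Schwarz saturation to $\sigma(X_1+X_2)$-measurability of the additive score combination, the solution of the functional equation forcing affine scores, and --- the crux --- the use of the infinite variance of the stable law to exclude nontrivial scalings and reduce affine to constant.
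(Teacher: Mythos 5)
Your proposal is correct and follows essentially the same route as the paper: Lemma \ref{l1} combined with the $L^2$-contractivity of conditional expectation (with the cross term killed by independence and the zero mean of the relative scores) gives \fer{BS}, and the scaling identity \fer{scal} converts it into \fer{BS1}, with boundedness immediate from the finite right-hand side. Your handling of the equality case is in fact more complete than the paper's own, which merely asserts that equality forces the constant scores \fer{33} and verifies only the converse implication; your chain --- saturation of the contraction forces $\sigma(X_1+X_2)$-measurability of the additive score combination, the functional equation forces affine scores, and finiteness of $I_\lambda$ together with the infinite variance of a genuinely rescaled stable law pins the slope to $-1/\lambda$ --- supplies exactly the missing direction.
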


\emph{A posteriori}, we can use the result of Theorem \ref{bl} to avoid inessential
difficulties in proofs by means of a smoothing argument. Indeed, since we are
interested in inequalities for convolutions of densities, and a L\'evy symmetric
stable law is stable with respect to the operation of convolution, we can consider in
the following text densities suitably smoothed by convolution with a L\'evy symmetric
stable law. In fact, if $Z_1$ and $Z_2$ denote two independent copies of a symmetric
L\'evy stable law $Z$ of order $\lambda$, for any given positive constants
$\epsilon_j$, $j=1,2$ the random variable $\epsilon_1^{1/\lambda}Z_1 +
\epsilon_2^{1/\lambda}Z_2$ is  symmetric L\'evy stable with the law of
$(\epsilon_1+\epsilon_2)^{1/\lambda}Z$. Therefore, for any given positive constant
$\epsilon <1$, and random variable $X$ with density function $f$ we will denote by
$f_\epsilon$ the density of the random variable $X_\epsilon$ given by
$(1-\epsilon)^{1/\lambda}X + \epsilon^{1/\lambda}Z$, where the symmetric stable L\'evy
variable $Z$ of order $\lambda$ is independent of $X$.

By virtue of Theorem \ref{bl},
 \be\label{ep}
 I_{\lambda}(X_\epsilon) \le (1-\epsilon)^{2/\lambda} I_\lambda \left(X\right).
 \ee

Hence, the relative fractional Fisher information of the smoothed version $X_\epsilon$
of the random variable  is always smaller than the relative fractional Fisher
information of $X$. Moreover,
 \[
 \lim_{\epsilon \to 0} I_{\lambda}(X_\epsilon) = I_\lambda \left(X\right).
 \]
This statement follows from \fer{ep} and Fatou's lemma. Indeed, in view of \fer{ep},
it only remains to show that $\liminf_{\epsilon \to 0}I_{\lambda}(X_\epsilon) \ge
I_\lambda \left(X\right)$. On $\{f>0\}$,  $(\mathcal D_{\lambda
-1}f_\epsilon)^2/f_\epsilon$ converges a.e. to $(\mathcal D_{\lambda -1}f)^2/f$. This
is enough to imply by Fatou's lemma the desired inequality.

The next ingredient in the proof of monotonicity deals with the so-called variance
drop inequality \cite{BM}. The idea goes back at least to the pioneering work of
Hoeffding on $U$ statistics \cite{Hoe}. Let $[n]$ denote the index set
$\{1,2,\dots,n\}$, and, for any ${\bf s} \subset [n]$, let $X_{\bf s}$ stand for the
collection of random variables $(X_i: i \in {\bf s})$, with the indices taken in their
natural increasing order. Then we have

\begin{thm}\label{kk} Let the function $\Phi : \R^m \to \R$, with $1 \le m \in \N$,
be symmetric in its
arguments, and suppose that $E\left[\Phi(X_1,X_2, \dots, X_m) \right]=0$. Define
 \be\label{good}
U(X_1,X_2,\dots, X_n) = \frac{m!(n-m)!}{n!} \sum_{\left\{ {\bf s} \subset [n]: |{\bf
s}| = m\right\}} \Phi\left( X_{\bf s}\right).
  \ee
Then
 \be\label{hof}
 E\left[U^2\right] \le \frac mn E\left[\Phi^2\right].
 \ee
\end{thm}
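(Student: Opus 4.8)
The plan is to prove \fer{hof} via the Hoeffding (ANOVA) decomposition of the symmetric kernel $\Phi$, which converts the estimate into a term-by-term comparison of combinatorial coefficients. Throughout I take $X_1,\dots,X_n$ to be i.i.d., as in the intended application. First I would decompose $\Phi$ orthogonally according to how many of its arguments it genuinely depends on. Writing $E_j$ for the operation of averaging out the $j$-th coordinate and $P_j=I-E_j$, set $\Phi_T=\bigl(\prod_{j\in T}P_j\bigr)\bigl(\prod_{j\notin T}E_j\bigr)\Phi$ for each $T\subseteq[m]$. Then $\Phi=\sum_T\Phi_T$, the summands are mutually orthogonal in $L^2$, and each $\Phi_T$ depends only on the coordinates in $T$ while having zero conditional mean in each of them. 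Because $\Phi$ is symmetric, $\Phi_T$ is a fixed symmetric function $\bar\phi_k$ of the variables indexed by $T$ whenever $|T|=k$, and the hypothesis $E[\Phi]=0$ removes the $T=\emptyset$ term, so that $\Phi=\sum_{k=1}^m\sum_{|T|=k}\bar\phi_k(X_T)$.

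Next I would substitute this decomposition into \fer{good} and regroup the sum by the set of indices actually involved. For a fixed $k$-subset $T\subseteq[n]$, the number of $m$-subsets $\mathbf s$ with $T\subseteq\mathbf s$ is $\binom{n-k}{m-k}$, so after exchanging the order of summation
\[
U=\frac{1}{\binom{n}{m}}\sum_{k=1}^m\binom{n-k}{m-k}\,W_k,\qquad W_k=\sum_{|T|=k}\bar\phi_k(X_T).
\]
The key orthogonality fact is that $E[\bar\phi_k(X_T)\,\bar\phi_{k'}(X_{T'})]=0$ unless $k=k'$ and $T=T'$: if some index lies in exactly one of $T,T'$, averaging that coordinate annihilates the product since $\bar\phi$ has zero mean in each argument. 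Hence the $W_k$ are pairwise orthogonal, $E[W_k^2]=\binom{n}{k}\sigma_k^2$ with $\sigma_k^2:=E[\bar\phi_k^2]$, and by the same count applied to $[m]$ one gets $E[\Phi^2]=\sum_{k=1}^m\binom{m}{k}\sigma_k^2$.

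Then I would compute $E[U^2]=\binom{n}{m}^{-2}\sum_{k=1}^m\binom{n-k}{m-k}^2\binom{n}{k}\sigma_k^2$ and simplify with the identity $\binom{n}{m}\binom{m}{k}=\binom{n}{k}\binom{n-k}{m-k}$, which collapses the $k$-th coefficient to $\binom{m}{k}^2/\binom{n}{k}$. The inequality \fer{hof} thus reduces to showing, for every $1\le k\le m$, that $\binom{m}{k}/\binom{n}{k}\le m/n$. This follows by writing
\[
\frac{\binom{m}{k}}{\binom{n}{k}}=\prod_{j=0}^{k-1}\frac{m-j}{n-j},
\]
and noting that $\frac{m-j}{n-j}\le\frac mn$ for every $j\ge0$ when $m\le n$ (the $j=0$ factor equals $m/n$ and every later factor is at most $1$). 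Summing the per-$k$ bound against the nonnegative weights $\sigma_k^2$ yields $E[U^2]\le\frac mn\,E[\Phi^2]$.

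The hard part will be the bookkeeping in the decomposition step: establishing the $L^2$-orthogonality of the $\Phi_T$, verifying the clean vanishing of all cross terms, and correctly tracking the multiplicities $\binom{n-k}{m-k}$ and $\binom{n}{k}$ when passing from the sum over $m$-subsets to the grouped sum over $k$-subsets. Once the algebraic identity for the coefficients is secured, the entire analytic content of the theorem is contained in the elementary bound $\frac{m-j}{n-j}\le\frac mn$.
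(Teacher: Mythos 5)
The paper itself does not prove Theorem \ref{kk}: it is quoted as the known variance-drop inequality, citing Hoeffding \cite{Hoe} and Madiman--Barron \cite{BM}, whose proof is precisely the Hoeffding/ANOVA decomposition you use, so your proposal coincides with the (cited) source's approach rather than diverging from anything in the paper. Your argument is correct in all details: the decomposition $\Phi=\sum_{k\ge1}\sum_{|T|=k}\bar\phi_k(X_T)$ with the $k=0$ term killed by $E[\Phi]=0$, the orthogonality of the degenerate components, the regrouping multiplicity $\binom{n-k}{m-k}$, the identity $\binom{n}{m}\binom{m}{k}=\binom{n}{k}\binom{n-k}{m-k}$, and the elementary bound $\binom{m}{k}/\binom{n}{k}\le m/n$ for $k\ge1$ all hold, and the i.i.d. assumption you impose is exactly the setting in which the theorem is applied in the paper. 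It is worth emphasizing (as you implicitly do) that the hypothesis $E[\Phi]=0$ is essential precisely because it removes the $k=0$ stratum, for which the coefficient comparison would fail.
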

In theoretical statistical, $U$ defined in \fer{good} is called a $U$-statistic of
degree $m$ with symmetric, mean zero kernel $\Phi$ that is applied to data of sample
size $n$. Thus, the essence of inequality \fer{hof} is to give a quantitative estimate
of the reduction of the variance of a $U$-statistic when the sample size $n$
increases. It is remarkable that, as soon as $m>1$ the functions $\Phi\left( X_{\bf
s}\right)$ are no longer independent. Nevertheless, the variance of the $U$-statistic
drops by a factor $m/n$.

With the essential help of Theorem \ref{kk} we prove

\begin{thm}\label{main} Let $T_n$ denote the sum \fer{stab}, where the random variables $X_j$
are independent copies of a centered random variable $X$ with bounded relative
$\lambda$-Fisher information, $1<\lambda<2$. Then, for each $n >1$, the relative
$\lambda$-Fisher information of $T_n$ is decreasing in $n$, and the following bound
holds
 \be\label{mm}
I_\lambda\left(T_n \right)\le \left(\frac{n-1}n
\right)^{(2-\lambda)/\lambda}I_\lambda\left(T_{n-1} \right).
 \ee
\end{thm}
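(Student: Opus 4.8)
The plan is to run the projection-plus-variance-drop scheme of Madiman and Barron \cite{BM} directly on the normalized sums $T_n$, keeping the relative fractional score at the parameter $\upsilon=1$ throughout. As a preliminary reduction I would invoke the smoothing argument set up after Theorem~\ref{bl}: replacing $X$ by $X_\epsilon$, all densities become smooth so that the fractional scores and the conditional expectations below are well defined, and the general case is recovered by letting $\epsilon\to0$ through \fer{ep} and Fatou's lemma. Fix $n>1$ and, for each $j\in[n]$, write $T_n = ((n-1)/n)^{1/\lambda}\tilde T_j + (1/n)^{1/\lambda}X_j$, where $\tilde T_j=(n-1)^{-1/\lambda}\sum_{i\neq j}X_i$ has the law of $T_{n-1}$.

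The core step, which generalizes Lemma~\ref{l1}, is an $n$-fold leave-one-out identity for the relative score. Using the convolution behaviour of $\mD_{\lambda-1}$ and the conditional-expectation representation of the plain score obtained in the proof of Lemma~\ref{l1}, together with the dilation behaviour of $\mD_{\lambda-1}$ that underlies the scaling identity \fer{scal}, one gets for each $j$ that $\rho_\lambda^{T_n}(x)=b^{1-\lambda}E[\rho_\lambda^{\tilde T_j}(\tilde T_j)\mid T_n=x]$ with $b=((n-1)/n)^{1/\lambda}$. Averaging over $j$ and restoring the drift $x/\lambda$, one uses $\sum_{j}\tilde T_j=(n-1)^{1-1/\lambda}n^{1/\lambda}x$ on the set $\{T_n=x\}$ to check that the drift coefficient needed to reassemble $x/\lambda$ is precisely the one carried by the relative score of $\tilde T_j$ at parameter $\upsilon=1$. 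This produces the identity
\[
\tilde\rho_\lambda^{T_n}(x) = b^{1-\lambda}\,E\left[\frac1n\sum_{j=1}^n \tilde\rho_\lambda^{\tilde T_j}(\tilde T_j)\,\Big|\,T_n=x\right].
\]

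Once this identity is in hand the rest is routine. Since conditional expectation contracts the $L^2$ norm (Cauchy--Schwarz), $I_\lambda(T_n)=E[(\tilde\rho_\lambda^{T_n}(T_n))^2]\le b^{2(1-\lambda)}E[(\frac1n\sum_j\tilde\rho_\lambda^{\tilde T_j}(\tilde T_j))^2]$. The inner average is exactly a $U$-statistic of degree $m=n-1$ with symmetric mean-zero kernel $\Phi(x_1,\dots,x_{n-1})=\tilde\rho_\lambda^{T_{n-1}}((n-1)^{-1/\lambda}\sum_i x_i)$ --- symmetric because it depends only on the sum, mean-zero because the fractional score integrates to zero and $X$ is centered --- and the prefactor $1/n=(n-1)!\,1!/n!$ matches the normalization in \fer{good}. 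The variance-drop inequality \fer{hof} of Theorem~\ref{kk} then gives $E[(\frac1n\sum_j\tilde\rho_\lambda^{\tilde T_j}(\tilde T_j))^2]\le \frac{n-1}{n}E[\Phi^2]=\frac{n-1}{n}I_\lambda(T_{n-1})$. Combining, $I_\lambda(T_n)\le b^{2(1-\lambda)}\frac{n-1}{n}I_\lambda(T_{n-1})$, and since $b^{2(1-\lambda)}=((n-1)/n)^{2(1-\lambda)/\lambda}$ the two exponents collapse to $\frac{2(1-\lambda)}{\lambda}+1=\frac{2-\lambda}{\lambda}$, which is \fer{mm}; as $1<\lambda<2$ the prefactor is strictly less than $1$, giving the asserted monotone decrease.

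The main obstacle is the projection identity above, and specifically the bookkeeping that forces the relative score of the normalized leave-one-out sum to appear at $\upsilon=1$ rather than at some $n$-dependent value. Had I instead projected the unnormalized score of $\sum_j X_j$ onto the unnormalized sums $\sum_{i\neq j}X_i$, the drift terms would have matched only at $\upsilon=(n-1)/n$, forcing an extra rescaling through \fer{scal} at the end; working with $T_n$ and $\tilde T_j$ from the outset, and tracking the dilation factor $b^{1-\lambda}$ against the drift, is exactly what makes the clean exponent $(2-\lambda)/\lambda$ emerge and is the one place where the fractional case genuinely departs from the classical Madiman--Barron argument.
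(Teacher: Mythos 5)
Your proposal is correct and follows essentially the same route as the paper: a leave-one-out projection identity for the relative fractional score, followed by the Cauchy--Schwarz contraction of conditional expectation, the variance-drop inequality of Theorem \ref{kk} with $m=n-1$, and the scaling relation \fer{scal}, with the exponents collapsing to $(2-\lambda)/\lambda$ exactly as in \fer{mm}. The only difference is bookkeeping: the paper projects the score of the unnormalized sum $S_n$ onto relative scores at the parameter $\upsilon_n=(n-1)/n$ and rescales via \fer{scal} at the very end, whereas you normalize from the outset and carry the dilation factor $b^{1-\lambda}$ through the projection identity --- precisely the algebraically equivalent alternative you identify in your closing paragraph.
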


\begin{proof}

In what follows, for $n>1$,  $S_n = \sum_{j \in [n]} Y_j$ will denote the
(unnormalized) sum of the independent and identically distributed random variables
$Y_j$. Likewise, $S_n^{(k)} = \sum_{j \not= k }Y_j$ will denote the leave-one-out sum
leaving out $Y_k$. Since $S_n = S_n^{(k)} + Y_k$, for each $k \in [n]$ we can write
 \[
 x= E[Y_k |S_n=x] + E[S_n^{(k)} |S_n=x].
 \]
On the other hand,  since the $Y_j$ are independent and identically distributed, we
have the identity
 \[
E[S_n^{(k)} | S_n=x] = (n-1) E[Y_k | S_n=x],
 \]
which implies that we can write
 \[
x= \frac n{n-1}E[S_n^{(k)} |  S_n=x].
 \]
Hence, for the relative fractional score of $S_n$ we  conclude that,  if $\upsilon_n =
(n-1)/n$,
 \[
\tilde\rho_\lambda(S_n) = E\left[\tilde\rho_{\lambda,\upsilon_n}(S_{n-1}^{(k)}) |S_n
\right],
 \]
for all $k \in [n]$, and hence
 \[
\tilde\rho_\lambda(S_n) = E\left[\frac 1n \sum_{k
\in[n]}\tilde\rho_{\lambda,\upsilon_n}(S_{n-1}^{(k)}) \big| S_n  \right].
 \]
Proceeding as in Lemma \ref{l1}, namely by using the fact that the norm of the
fractional relative score is not less than that of its projection, we obtain
 \[
 I_\lambda(S_n) = E\left[\tilde\rho_\lambda^2(S_n) \right] \le E\left[ \left(
 \frac 1n \sum_{k
\in[n]}\tilde\rho_{\lambda,\upsilon_n}(S_{n-1}^{(k)})\right)^2\right].
 \]
To this point, Theorem \ref{kk} yields
 \[
E\left[ \left(
 \frac 1n \sum_{k
\in[n]}\tilde\rho_{\lambda,\upsilon_n}(S_{n-1}^{(k)})\right)^2\right] \le (n-1)
\sum_{k \in [n]} \frac 1{n^2} E\left[\tilde\rho_{\lambda,\upsilon_n}^2(S_{n-1}^{(k)})
\right] = \frac{n-1}n I_{\lambda,\upsilon_n}(S_{n-1}).
 \]
If we suppose that the right-hand side in the previous inequality is bounded, we
obtained, for $n >1$ the bound
 \be\label{bou}
I_\lambda(S_n) \le \frac{n-1}n I_{\lambda,\upsilon_n}(S_{n-1}).
 \ee
To end up, let us choose $Y_k = X_k/n^{1/\lambda}$. In this case, $S_n = T_n$, where
$T_n$ is the sum \fer{stab}. Moreover
 \[
 S_{n-1} = \frac{X_1+ X_2
+ \cdots + X_{n-1}}{n^{1/\lambda}} = \left( \frac {n-1}n\right)^{1/\lambda}T_{n-1}=
\upsilon_n ^{1/\lambda}T_{n-1}.
 \]
On the other hand, thanks to formula \fer{stab},
 \[
 I_{\lambda,\upsilon_n}(S_{n-1}) =  I_{\lambda,\upsilon_n}(\upsilon_n^{1/\lambda}T_{n-1}) \le
 \upsilon_n^{-2(1- 1/\lambda)}I_\lambda(T_{n-1}).
 \]
Substituting into \fer{bou} gives the result.
\end{proof}

\begin{rmk}
{\rm Surprisingly enough, at difference with the case of the standard central limit
theorem, where $\lambda = 2$ and the monotonicity result of the classical relative
Fisher information reads $I(S_n) \le I(S_{n-1})$, in the case of the central limit
theorem for stable laws, the monotonicity of the relative $\lambda$-Fisher information
also gives a rate of decay. Indeed, formula \fer{mm} of Theorem \ref{main} shows that,
for all $n>1$
 \be\label{giu}
I_\lambda(T_n) \le \left( \frac 1n \right)^{(2-\lambda)/\lambda} I_\lambda(X),
 \ee
namely convergence in relative $\lambda$-Fisher information sense at rate
$1/n^{(2-\lambda)/\lambda}$.}
\end{rmk}

\begin{rmk}
{\rm This result allows to enlighten, at the level of Fisher information,  a strong
difference between the classical central limit theorem and the central limit theorem
for stable laws. In the former case, the domain of attraction is very large and
contains all random variables with finite variance, while the attraction in terms of
relative Fisher information is, without additional assumptions, very low (only
monotonicity is guaranteed). In the latter, the domain of attraction is very
restricted and contains only random variables with distribution which has the same
tails at infinity of the L\'evy stable law. However, in this case the attraction in
terms of the relative fractional fisher information is very strong, and it is
inversely proportional to  the exponent $\lambda$ which characterizes the L\'evy
stable law.}
\end{rmk}

\section{The relative $\lambda$-entropy}

The previous results  show that theoretical information techniques can be fruitfully
employed to obtain a new approach to the central limit theorem for stable laws. In
particular, the new concept of relative fractional Fisher information seems nicely
adapted to the subject, since most of the classical properties can be easily extended
to this case. In particular, subadditivity of the classical Fisher information with
respect to weighted convolution is shown to hold also for the relative fractional
Fisher information.

However, up to now, Fisher information has been considered as a useful instrument to
obtain results for Shannon's entropy functional defined in \fer{Shan}. The main
finding in this context was in fact the proof of the monotonicity of Shannon's entropy
on the weighted sums in the central limit theorem. If stable laws are concerned, at
difference with Fisher information, it seems difficult to find an explicit expression
of the (non-local) corresponding of Shannon's entropy, say the relative fractional
entropy.

One possible way to obtain a consistent definition would be to establish between the
relative fractional entropy and the relative fractional Fisher information the same
link which connects Shannon entropy to Fisher information through the solution to the
heat equation \fer{heat}. In addition to de Bruijn relation, further connections
between entropy and Fisher information have been established by Barron \cite{Bar} and
Carlen and Soffer \cite{CS}. Given two random variables $X$ and $Y$ of densities
$f(x)$ and, respectively, $g(x)$, the relative entropy  $H(X|Y)$ of $X$ and $Y$ is
defined as
 \[
H(X|Y) = H(f|g)= \int_\R f(x) \log \frac{f(x)}{g(x)} \, dx.
 \]
If $X$ is a random variable with a density $f(x)$ and arbitrary finite variance, and
$f_t(x)= f(x,t)$ denotes the solution to the heat equation \fer{heat} such that $f(x,
t=0)= f(x)$  then it holds \cite{Bar,CS}
 \[
H\left(X|Z\right) = \int_0^\infty I(f_t|z_{1+t})\, dt,
 \]
where $Z$ denotes as usual the Gaussian density of variance $1$, and $I(f_t|z_{1+t})$
is the relative (to the Gaussian of variance $1+t$) Fisher information defined in
\fer{fish-r}. In analogous way, one can define the relative (to the stable law)
relative entropy. Let $X$ be a random variable with density $f(x)$, and let $f_t(x)=
f(x,t)$ denote the solution to the fractional diffusion equation \fer{frac1} such that
$f(x, t=0)= f(x)$. Then, it appears natural to define the fractional relative entropy
of order $\lambda$ as
 \be\label{new-ent}
H_\lambda(X) = \int_0^\infty I_{\lambda, 1+t}(f_t)\, dt,
 \ee
where $I_{\lambda, 1+t}(f_t)$ denotes the relative fractional Fisher information
defined as in \fer{fish-r3}.

The relative fractional entropy given by \fer{new-ent} is well-defined. In fact, if
the random variable $X$ has a density $f(x)$, the solution to the fractional diffusion
equation \fer{frac1} has the density $f(x,t)$ of the sum $X_t = X+ t^{1/\lambda}Z$. By
inequality \fer{scal}
 \[
I_{\lambda, 1+t}(f_t) = I_{\lambda, 1+t}(X_t)\le (1+t) ^{-2(1-1/\lambda}
I_\lambda\left( X_t(1+t)^{-1/\lambda} \right).
 \]
In addition, since
 \[
X_t(1+t)^{-1/\lambda} = \left(\frac 1{1+t}\right)^{1/\lambda} X + \left(\frac
t{1+t}\right)^{1/\lambda} X,
 \]
thanks to inequality \fer{ep} it holds
 \[
I_\lambda\left( X_t(1+t)^{-1/\lambda} \right) \le I_\lambda\left( X \right).
 \]
Finally,
 \[
I_{\lambda, 1+t}(f_t) \le (1+t) ^{-2(1-1/\lambda} I_\lambda\left( X \right),
 \]
so that, integrating both sides from $0$ to $\infty$, we obtain
 \be\label{bbb}
H_\lambda(X) \le \frac\lambda{2-\lambda}I_\lambda\left( X \right).
 \ee
Analogously to the classical case, inequality \fer{bbb} implies that the domain of
attraction of the relative $\lambda$-Fisher information is a subset of the domain of
attraction of the relative $\lambda$-entropy.

Despite its complicated structure, as it happens in the classical situation, all
inequalities satisfied by the relative fractional Fisher information also hold for the
relative fractional entropy. By formula \fer{new-ent} we can easily show that Theorem
\ref{main} also is valid for the fractional relative entropy of order $\lambda$. In
the classical case, however, Csiszar--Kullback inequality \cite{Csi, Kul} allows to
pass from convergence in relative entropy to convergence in $L^1(\R)$. It would be
interesting to show that a similar result still holds in the case of the relative
fractional entropy, but at present this remains an open question.

\section{Conclusions}
Starting  from the pioneering work of Linnik \cite{Lin}, the role of Fisher
information to obtain alternative proofs of the central limit theorem has been
enlightened by a number of papers (cf. \cite{ABBN1, ABBN2, Bar, BCG2, BCG3, CS, Joh,
JB, MB, TV} and the references therein). Only recently, Bobkov, Chistyakov  and
G\"otze  \cite{BCG3} used of the relative Fisher information to study convergence to
stable laws.  At difference with the previous existing literature,  we  studied the
role of Fisher-like functionals in the central limit theorem for stable laws by
resorting to the new concept of relative fractional Fisher information. This nonlocal
functional   relies on the consideration of a linear fractional score function. As the
linear score function of a random variable $X$ identifies  Gaussian variables as the
unique random variables for which the score is linear, L\'evy symmetric stable laws
are here identified as the unique random variables for which the fractional linear
score is linear. This analogy is pushed further to show that the relative fractional
Fisher information, defined as the variance of the relative score, satisfies almost
all properties of the classical relative Fisher information. While the fractional
Fisher information represents in our opinion a powerful instrument to study
convergence towards symmetric stable laws, the role of the analogous of Shannon's
entropy in this context at present remains obscure, and will deserve further
investigations.

\section{Appendix}

To clarify that the domain of attraction of the fractional relative Fisher information
constitute a notion that could be fruitfully used, we retain of paramount important to
prove that this domain is not an empty subset of the classical domain of attraction of
the stable law. To this aim, we will provide in this appendix an explicit example of a
density which belongs both to the domain of attraction of the stable law, and to the
domain of attraction of the relative fractional Fisher information.

 To start with, let us briefly recall some information about the domain of attraction of a stable law. More details can be found in the book
\cite{Ibra71} or, among others, in the papers \cite{BLM}, \cite{BLR}. A centered
distribution $F$ belongs to the domain of normal attraction of the $\lambda$-stable
law \fer{levy} with distribution function $L_\lambda(x)$ if and only if $F$ satisfies
$|x|^\lambda F(x)\to c$ as $x\to -\infty$ and $x^\lambda (1-F(x))\to c$ as $x\to
+\infty$ i.e.
\begin{equation}\label{cardis}
\begin{aligned}
&F(-x)=\frac{c}{|x|^\lambda}+S_1(-x) \ \ \ \ \ \ {\rm and }\ \ \ \ \
\ \ 1-F(x)=\frac{c}{x^\lambda}+S_2(x) \ \ \ \ \ \ (x>0)\\
&S_i(x)=o(|x|^{-\lambda})\ \ \  \ {\rm as}\ |x|\to +\infty, \ \ \
i=1,2\\
\end{aligned}
\end{equation}
where $c=\frac{\Gamma(\lambda)}{\pi}\sin\left(\frac{\pi\lambda}{2}\right)$.

If the distribution function $F$ belongs to the domain of normal attraction of the
$\lambda$-stable law, for any $\nu$ such that $0< \nu<\lambda$ \cite{Ibra71}
 \be\label{mome} \int_{\R}|x|^{\nu} \,dF(x)<+\infty.
  \ee
The behavior of $F$ in the physical space \fer{cardis} leads to a characterization of
the domain of normal attraction of  the $\lambda$-stable law \fer{levy} in terms of
Fourier transform. Indeed, if $\widehat f$ is the Fourier transform of the
distribution function $F$ satisfying \fer{cardis}, then
 \be\label{fff}
1- \widehat f(\xi) = (1 - R(\xi))|\xi|^\lambda ,
 \ee
where
 \[
R(\xi) \in L^\infty(\R), \quad {\rm and}\quad |R(\xi)|  = o\left(1\right) \quad \xi
\to 0.
 \]
Let us consider a probability density $f(x)$ that belongs to the domain of attraction
of $L_\lambda$, with $\lambda >1$. Then, we have enough regularity to reckon the
Fourier transform of
 \[
\Psi_\lambda(x) = \mD_\alpha f(x) + \frac x\lambda f(x),
 \]
and, thanks to \fer{fff} we obtain
 \[
\widehat\Psi_\lambda(x) = i\xi|\xi|^{\lambda-2} \widehat f(\xi) + i\frac 1\lambda
\frac{\partial \widehat f(\xi)}{\partial \xi} = -i\xi|\xi|^{2\lambda-2} + V(\xi).
 \]
It is known that the leading small $\xi$-behavior of the singular component of the
Fourier transform \fer{expl} will reflect an algebraic tail of decay of the
distribution function (cf. for example Wong \cite{Wong}). In our case, since
$\widehat\Psi_\lambda(\xi)$ contains the term $\xi|\xi|^{2\lambda -1}$,  $\Psi(x)$
should decay at infinity  as  $|x|^{-2\lambda +1}$.

The leading
example of a function which belongs to the domain of attraction of the
$\lambda$-stable law is the so-called {\it Linnik distribution} \cite{L, L2},
expressed in Fourier variable by
  \be\label{Max-f}
  \widehat p_\lambda(\xi) = \frac 1{1+ |\xi|^\lambda}.
  \ee
For all $0<\lambda \leq 2$,  the function  \fer{Max-f} is the
characteristic function of a symmetric probability distribution. In addition, when
$\lambda > 1$, $\widehat p_\lambda \in L^1(\R)$, which, by applying the
inversion formula, shows that $p_\lambda$ is a probability density function.

The main properties of Linnik's distributions can be extracted from its representation
as a mixture (cf. Kotz and Ostrovskii \cite{KO}). For any given pair of positive
constants $a$ and $b$, with $0 < a < b \le 2$ let $g(s, a, b)$ denote the probability
density
 \[
g(s, a, b) = \left( \frac b\pi \sin\frac{\pi a}b \right) \frac{s^{a -1}}{1+ s^{2a} +
2s^a \cos\frac{\pi a}b }, \quad 0 <s<\infty.
 \]
Then, the following equality holds \cite{KO}
 \be\label{mix}
 \widehat p_a(\xi) = \int_0^\infty \widehat p_b(\xi/s) g(s, a, b)\, ds,
 \ee
or, equivalently
 \[
 p_a(x) = \int_0^\infty p_b(sx) g(s, a, b)\, ds.
 \]
This representation allows us to generate Linnik distributions of different parameters
starting from a convenient base, typically from the Laplace distribution
(corresponding to $b =2$). In this case, since $\widehat p_2(\xi)= 1/(1+|\xi|^2)$
(alternatively $p_2(x) = e^{-|x|}/2$ in the physical space), for any $\lambda$ with
$1<\lambda <2$ we obtain the explicit representation
 \be\label{oo}
  \widehat p_\lambda(\xi) = \int_0^\infty \frac {s^2}{s^2+ |\xi|^2} \,\,g(s, \lambda, 2)\, ds,
 \ee
or, in the physical space
 \be\label{lin2}
 p_\lambda(x) = \int_0^\infty \frac s2 \, e^{-s|x|} g(s, \lambda, 2)\, ds.
  \ee
Owing to \fer{lin2} we obtain easily that, for $1<\lambda<2$,  Linnik's probability
density is a symmetric and bounded function, non-increasing and convex for $x >0$.
Moreover, since Linnik's distribution belongs to the domain of attraction of the
stable law of order $\lambda$, $p_\lambda(x)$ decays to zero like $|x|^{1+\lambda}$ as
$|x| \to \infty$. These properties insure that there exist positive constants
$A_\lambda$ and $B_\lambda$ such that
 \be\label{nn}
  p_\lambda^{-1}(x) \le A_\lambda +B_\lambda |x|^{1+\lambda}.
 \ee
In reason of \fer{nn},  we obtain the bound
 \begin{equations}\label{bbn}
&I_\lambda(p_\lambda) =  \int_\R \left( \frac{\mD_{\lambda-1} p_\lambda(x)}{p_\lambda(x)} +
\frac x\lambda \right)^2p_\lambda(x)\, dx = \\
&\int_\R \left( \mD_{\lambda-1} +
 \frac x\lambda\, p_\lambda(x)\right)^2p_\lambda^{-1}(x)\, dx
  \le \int_\R  g_\lambda^2(x)
  (A_\lambda +B_\lambda |x|^{1+\lambda})\, dx.
 \end{equations}
In \fer{bbn}  we defined
 \[
  g_\lambda(x) = \mD_{\lambda -1}p_\lambda(x) + \frac x\lambda p_\lambda(x).
 \]
Explicit computations give
 \be\label{expl}
\widehat g_\lambda(\xi)= \frac{i \xi |\xi|^{2\lambda -2}}{\left( 1+
|\xi|^\lambda\right)^2}.
 \ee
As discussed before, the algebraic tail of decay of a function is given by the leading
small $\xi$-behavior of the singular component of its Fourier transform \fer{expl}.
Hence, while the leading singular component in the Linnik distribution \fer{Max-f} is
$|\xi|^\lambda$, which induces condition \fer{mome}, by \fer{expl} the leading
singular component in $\widehat g_\lambda^2$  is $|\xi|^{4\lambda -2}$, which would
imply that the right-hand side of \fer{bbn} is bounded, in reason of the fact that
$4\lambda -2 > \lambda +1$ for $\lambda >1$. This suggests that the relative
fractional Fisher information of Linnik's distribution is bounded.

An explicit proof of this property follows owing to representation \fer{mix}. To this extent, consider that the function $\widehat g_\lambda(\xi)$ defined in \fer{expl} can be also written in the form
 \[
 \widehat g_\lambda(\xi)= -i \frac{|\xi|^\lambda}\lambda\frac {d p_\lambda(\xi)}{d\xi}.
 \]
Therefore, differentiating under the integral sign in \fer{oo} we obtain the equivalent expression
 \be\label{opp}
\widehat g_\lambda(\xi)= \int_0^\infty \widehat h_\lambda(\xi/s) s^{\lambda-1} \,\,g(s, \lambda, 2)\, ds,
 \ee
where
 \be\label{qqq}
\widehat h_\lambda(\xi) = \frac{2i}\lambda \frac{\xi|\xi|^\lambda}{(1+ |\xi|^2)^2}.
 \ee
Since $1<\lambda <2$, it follows that both $\widehat h_\lambda$ and $\widehat
h_\lambda''$ belong to $L^2(\R)$. Hence,  by Plancherel's identity  we obtain
 \be\label{momo}
 \int_\R |x|^4 h_\lambda^2(x) \, dx =  \int_\R |x^2 h_\lambda(x)|^2 \, dx =
 \frac 1{2\pi}\int_\R |\widehat h_\lambda''(\xi)|^2 \, d\xi  < +\infty.
 \ee
Thus, for any given $R>0$
 \[
 \int_\R |x|^{1+\lambda} h_\lambda^2(x) \, dx \le \int_{\{|x| \le R\}} |x|^{1+\lambda} h_\lambda^2(x)\, dx + \frac 1{R^{3-\lambda}} \int_{\{|x| > R\}} |x|^{4} h_\lambda^2(x)\, dx \le
 \]
 \[
 \le (2R)^{1+\lambda}\int_\R h_\lambda^2(x) \, dx + \frac 1{R^{3-\lambda}} \int_{\R} |x|^{4} h_\lambda^2(x)\, dx.
 \]
Optimizing over $R$ we obtain
 \[
\int_\R |x|^{1+\lambda} h_\lambda^2(x) \, dx \le C_\lambda \left(\int_\R h_\lambda^2(x) \, dx\right)^{(3-\lambda)/4} \left(\int_\R |x|^4 h_\lambda^2(x) \, dx\right)^{(1+\lambda)/4},
 \]
where $C_\lambda $ is an explicitly computable constant. Note that, in view of \fer{momo}, the right-hand side of the previous inequality is bounded. Finally, by Jensen's inequality  we have
  \[
 \int_\R |x|^{1+\lambda} g_\lambda^2(x) \, dx  = \int_\R |x|^{1+\lambda}\left[ \int_0^\infty s h_\lambda(sx)\, s^{\lambda-1} \,\,g(s, \lambda, 2)\, ds \right]^2\, dx \le
  \]
 \[
\int_\R |x|^{1+\lambda} \left[ \int_0^\infty \left(h_\lambda(sx)\, s^{\lambda}\right)^2 \,\,g(s, \lambda, 2)\, ds \right]\, dx =
 \]
 \[
 \int_\R |x|^{1+\lambda}h_\lambda^2(x) \, dx \int_0^\infty s^{\lambda-2}\,g(s, \lambda, 2)\, ds. \]
By definition, the probability density $g(s, \lambda, 2) \in L^1(\R)\bigcap L^\infty(\R)$. This implies that, for $1 < \lambda <2$
 \[
\int_0^\infty s^{\lambda-2}\,g(s, \lambda, 2)\, ds < +\infty.
 \]
Hence the relative fractional Fisher information of the Linnik's density is bounded.

\bigskip

\bigskip \noindent

\noindent{\bf Acknowledgments:} This work has been written within the activities of
the National Group of Mathematical Physics of INDAM (National Institute of High
Mathematics). The support of the  project ``Optimal mass transportation, geometrical
and functional inequalities with applications'', financed by the Minister of
University and Research, is kindly acknowledged.





\vskip 3cm

\end{document}